\DeclareMathAlphabet{\mathcal}{OMS}{cmsy}{m}{n}
\newcommand{\R}{\mathbb{R}}
\renewcommand{\E}{\mathbb{E}}
\DeclareMathOperator*{\argmin}{arg\,min}
\DeclareMathOperator*{\argmax}{arg\,max}
\begin{document}
\title*{Local Volatility Calibration by Optimal Transport}
\author{Ivan Guo$^{*\dagger}$ \and Gr\'egoire Loeper$^{*\dagger}$ \and Shiyi Wang$^{*}$}
\institute{
$^*$ School of Mathematical Sciences, Monash University, Australia\\
$^\dagger$ Centre for Quantitative Finance and Investment Strategies, Monash University, Australia\\
\textbf{Acknowledgements\quad} The Centre for Quantitative Finance and Investment Strategies has been supported by BNP Paribas.
}
%
%
\maketitle

\abstract{The calibration of volatility models from observable option prices is a fundamental problem in quantitative finance. The most common approach among industry practitioners is based on the celebrated Dupire's formula \cite{Dupire94pricing}, which requires the knowledge of vanilla option prices for a continuum of strikes and maturities that can only be obtained via some form of price interpolation. In this paper, we propose a new local volatility calibration technique using the theory of optimal transport. We formulate a time continuous martingale optimal transport problem, which seeks a martingale diffusion process that matches the known densities of an asset price at two different dates, while minimizing a chosen cost function. Inspired by the seminal work of Benamou and Brenier \cite{MR1738163}, we formulate the problem as a convex optimization problem, derive its dual formulation, and solve it numerically via an augmented Lagrangian method and the alternative direction method of multipliers (ADMM) algorithm. The solution effectively reconstructs the dynamic of the asset price between the two dates by recovering the optimal local volatility function, without requiring any time interpolation of the option prices.
}

\section{Introduction}
\label{sec:introduction}

A fundamental assumption of the classical Black-Scholes option pricing framework is that the underlying risky asset has a constant volatility. However, this assumption can be easily dispelled by the option prices observed in the market, where the implied volatility surfaces are known to exhibit ``skews'' or ``smiles''. Over the years, many sophisticated volatility models have been introduced to explain this phenomenon. One popular class of model is the local volatility models. In a local volatility model, the volatility function $\sigma(t,S_t)$ is a function of time $t$ as well as the asset price $S_t$. The calibration of the local volatility function involves determining $\sigma$ from available option prices. 

One of the most prominent approaches for calibrating local volatility is introduced by the path-breaking work of Dupire \cite{Dupire94pricing}, which provides a method to recover the local volatility function $\sigma(t,s)$ if the prices of European call options $C(T,K)$ are known for a continuum of maturities $T$ and strikes $K$. In particular, the famous Dupire's formula is given by
\begin{equation}
  \sigma^2(T,K) = \frac{\frac{\partial C(T,K)}{\partial T}+\mu_t K\frac{\partial C(T,K)}{\partial K}}{\frac{K^2}{2}\frac{\partial^2 C(T,K)}{\partial K^2}},
\end{equation}
where $\mu_t$ is a deterministic function. However, in practice, option prices are only available at discrete strikes and maturities, hence interpolation is required in both variables to utilize this formula, leading to many inaccuracies. Furthermore, the numerical evaluation of the second derivative in the denominator can potentially cause instabilities in the volatility surface as well as singularities. Despite these drawbacks, Dupire's formula and its variants are still used prevalently in the financial industry today. 

In this paper, we introduce a new technique for the calibration of local volatility functions that adopts a variational approach inspired by optimal transport.
The optimal transport problem was first proposed by Monge \cite{monge1781memoire} in 1781 in the context of civil engineering. The basic problem is to transfer material from one site to another while minimizing transportation cost. In the 1940's, Kantorovich \cite{MR2117877} provided a modern treatment of the problem based on linear programming techniques, leading to the so-called Monge-Kantorovich problem. Since then, the theory of optimal transport has attracted considerable attention with applications in many areas such as fluid dynamics, meteorology, econometrics and cosmology (see, e.g., \cite{MR1698853}, \cite{MR2209129} and \cite{MR2459454}). Recently, there have been a few studies extending optimal transport to stochastic settings with applications in financial mathematics. For instance, Tan and Touzi \cite{MR3127880} studied an extension of the Monge-Kantorovich problem for semimartingales, while Dolinsky and Soner \cite{MR3256817} applied martingale optimal transport to the problem of robust hedging.

In our approach, we begin by recovering the probability density of the underlying asset at times $t_0$ and $t_1$ from the prices of European options expiring at $t_0$ and $t_1$. Then, instead of interpolating between different maturities, we seek a martingale diffusion process which transports the density from $t_0$ to $t_1$, while minimizing a particular cost function. This is similar to the classical optimal transport problem, with the additional constraint that the diffusion process must be a martingale driven by a local volatility function. In the case where the cost function is convex, the problem can be reformulated as a convex optimization problem under linear constraints, and this problem has been recently studied by Huesmann et al.\! in \cite{2017arXiv170701493H} and \cite{2017arXiv170804869B}. Theoretically, the stochastic control problem can be reformulated as an optimization problem which involves solving a non-linear PDE at each step, and the PDE is closely connected with the ones studied in Bouchard et al.\! \cite{MR3715374, MR3519167} and Loeper \cite{2013arXiv1301.6252L} in the context of option pricing with market impact. For this paper, we approach the problem via the augmented Lagrangian method and the alternative direction method of multipliers (ADMM) algorithm, which was also used in Benamou and Brenier \cite{MR1738163} for classical optimal transport problems. 

The paper is organized as follows. In Section 2, we introduce the classical optimal transport problem as formulated by Benamou and Brenier \cite{MR1738163}. In Section 3, we introduce the martingale optimal transport problem and its augmented Lagrangian. The numerical method is detailed in Section 4 and numerical results are given in Section 5.


\section{Optimal Transport}
\label{sec:optimal_transport}
In this section, we briefly outline the optimal transport problem as formulated by Benamou and Brenier \cite{MR1738163}.
Given density functions $\rho_0, \rho_1: \R^d \to [0,\infty)$ with equal total mass $\int_{\R^d} \rho_0(x) \D x = \int_{\R^d} \rho_1(x) \D x$. We say that a map $s:\R^d\to\R^d$ is an admissible transport plan if it satisfies
\begin{equation}\label{eq:balance}
  \int_{x\in A}\rho_1(x)\D x = \int_{s(x)\in A}\rho_0(x)\D x,
\end{equation}
for all bounded subset $A\subset\R^d$. Let $\mathcal{T}$ denote the collection of all admissible maps. Given a cost function $c(x,y)$, which represents the transportation cost of moving one unit of mass from $x$ to $y$, the optimal transport problem is to find an optimal map $s^*\in\mathcal{T}$ that minimizes the total cost
\begin{equation}
  \inf_{s\in\mathcal{T}}\int_{\R^d}c(x, s(x)) \rho_0(x) \D x.
\end{equation}
In particular, when $c(x,y) = |y-x|^2$ where $|\cdot|$ denotes the Euclidean norm, this problem is known as the $L^2$ Monge-Kantorovich problem (MKP). 

The $L^2$ MKP is reformulated in \cite{MR1738163} in a fluid mechanic framework. In the time interval $t\in [0,1]$, consider all possible smooth, time-dependent, densities $\rho(t,x)\geq 0$ and velocity fields $v(t,x)\in\R^d$, that satisfy the continuity equation
\begin{equation}\label{eq:continuity_equation}
  \partial_t \rho(t,x) + \nabla\cdot(\rho(t,x) v(t,x))=0, \qquad \forall t\in[0,1],\;\forall x\in\R^d,
\end{equation}
and the initial and final conditions
\begin{equation}\label{eq:initial_final_condition}
  \rho(0,x)=\rho_0 ,\quad \rho(1,x)=\rho_1.
\end{equation}
In \cite{MR1738163}, it is proven that the $L^2$ MKP is equivalent to finding an optimal pair $(\rho^*,v^*)$ that minimizes
\begin{equation}
  \inf_{\rho,v}\int_{\R^d}\int^1_0 \rho(t,x)|v(t,x)|^2 \D t \D x,
\end{equation}
subject to the constraints (\ref{eq:continuity_equation}) and (\ref{eq:initial_final_condition}). This problem is then solved numerically in \cite{MR1738163} via an augmented Lagrangian approach. The specific numerical algorithm used is
known as the alternative direction method of multipliers (ADMM), which has applications in statistical learning and distributed optimization.

\section{Definition of the martingale problem}
\label{sec:setting}
Let $(\Omega, \mathbb{F}, \mathbb{Q})$ be a probability space, where $\mathbb{Q}$ is the risk-neutral measure. Suppose the dynamic of an asset price $X_t$ on $t\in[0,1]$ is given by the local volatility model
\begin{equation}\label{eq:diffusion_process_2}
  \D X_t = \sigma(t,X_t)\D W_t, \quad t\in[0,1],
\end{equation}
where $\sigma(t,x)$ is a local volatility function and $W_t$ is a one-dimensional Brownian motion. For the sake of simplicity, suppose the interest and dividend rates are zero.
Denote by $\rho(t,x)$ the density function of $X_t$ and $\gamma(t,x)=\sigma(t,x)^2/2$ the diffusion coefficient. It is well known that $\rho(t,x)$ follows the Fokker-Planck equation
\begin{equation}\label{eq:fokker_planck}
  \partial_t\rho(t,x) -\partial_{xx}(\rho(t,x)\gamma(t,x)) = 0.
\end{equation}
Suppose that the initial and the final densities are given by $\rho_0(x)$ and $\rho_1(x)$, which are recovered from European option prices via the Breeden-Litzenberger \cite{breeden1978prices} formula,
\[
\rho_T(K) = \frac{\partial^2 C(T,K)}{\partial K^2}.
\]

Let $F:\R\to\R\cup\{+\infty\}$ be a convex cost function. We are interested in minimizing the quantity
\[
\E\left(\int^1_0 F\left(\gamma(t,X_t)\right) \D t \right)=\int_D\int^1_0 \rho(t,x)F\left(\gamma(t,X_t)\right) \D t \D x,
\]
where $F(x)=+\infty$ if $x<0$, and $D\subseteq \R$ is the support of $\{X_t, t\in[0,1]\}$. Unlike the classical optimal transport problem, the existence of a solution here requires an additional condition: there exists a martingale transport plan if and only if $\rho_0$ and $\rho_1$ satisfy:
\[
  \int_\R\varphi(x)\rho_0(x)\D x \leq \int_\R\varphi(x)\rho_1(x)\D x,
\]
for all convex function $\varphi(x):\R\to\R$. This is known as Strassen's Theorem \cite{MR0177430}. This condition is naturally satisfied by financial models in which the asset price follows a martingale diffusion process.


\begin{remark}
The formulation here is actually quite general and it can be easily adapted to a large family of models. For example, the case of a geometric Brownian motion with local volatility can be recovered by substituting
$\tilde\sigma(t,X_t)X_t=\sigma(t,X_t)$ everywhere, including in the Fokker-Planck equation. The cost function $F$ would then also be dependent on $x$. The later arguments involving convex conjugates still hold since $F$ remains a convex function of $\tilde\sigma$.
\end{remark}

Since $\rho F(\gamma)$ is not convex in $(\rho,\gamma)$ (which is crucial for our method), the substitution 
$m(t,x):=\rho(t,x)\gamma(t,x)$ is applied.
So we obtain the following \emph{martingale optimal transport problem}:
\begin{equation}\label{eq:objective}
  \inf_{\rho,m} \int_D\int^1_0 \rho(t,x)F\left(\frac{m(t,x)}{\rho(t,x)}\right) \D t \D x,
\end{equation}
subject to the constraints:
\begin{gather}
\rho(0,x) = \rho_0(x) ,\quad \rho(1,x) = \rho_1(x),\label{eq:initial_final}\\
 \partial_t\rho(t,x) -\partial_{xx}m(t,x) = 0.\label{eq:fokker_planck2}
\end{gather}
Using the convexity of $F$, the term $\rho F(m/\rho)$ can be easily verified to be convex in $(\rho,m)$. Also note that we have the natural restrictions of $\rho>0$ and $m \geq 0$. Note that $m\geq 0$ is enforced by penalizing the cost function $F$, and $\rho>0$ will be encoded in the convex conjugate formulation. (see Proposition \ref{propap02})

Next, introduce a time-space dependent Lagrange multiplier $\phi(t,x)$ for the constraints \eqref{eq:initial_final} and \eqref{eq:fokker_planck2} . Hence the associated Lagrangian is 
\begin{equation}\label{eq:Lagrangian_1}
  L(\phi, \rho, m) = \int_\R\int_0^1 \rho(t,x)F\left(\frac{m(t,x)}{\rho(t,x)}\right) + \phi(t,x)\big(\partial_t\rho(x)-\partial_{xx}(m(t,x))\big) \D t \D x.
\end{equation}
Integrating \eqref{eq:Lagrangian_1} by parts and letting $m=\rho\gamma$ vanish on the boundaries of $D$, the martingale optimal transport problem can be reformulated as the following saddle point problem:
\begin{align}\label{eq:Lagrangian_2}
    \inf_{\rho,m}\sup_\phi L(\phi, \rho, m) &=   \inf_{\rho,m}\sup_\phi\int_D\int_0^1 \left(\rho F\left(\frac{m}{\rho}\right) - \rho\partial_t\phi - m\partial_{xx}\phi\right)  \D t \D x \nonumber\\
	&\qquad- \int_D\left(\phi(0,x)\rho_0-\phi(1,x)\rho_1\right)\D x.
\end{align}
As shown by Theorem 3.6 in \cite{MR3127880}, (\ref{eq:Lagrangian_2}) has an equivalent dual formulation which leads to the following representation:
\begin{align}
	\sup_\phi\inf_{\rho,m} L(\phi,\rho,m)&=\sup_\phi\inf_{\rho} \int_D\int_0^1 -\rho\left(\partial_t \phi+F^*(\partial_{xx}\phi)\right)  \D t \D x \nonumber\\
	&\qquad- \int_D\left(\phi(0,x)\rho_0-\phi(1,x)\rho_1\right)\D x.
\end{align}
In particular, the optimal $\phi$ must satisfy the condition
\begin{equation}\label{eq:HJB}
  \partial_t \phi+F^*(\partial_{xx}\phi)=0,
\end{equation}
where $F^*$ is the convex conjugate of $F$ (see \eqref{eq:lagendre_transform} and Proposition \ref{propap02}). We will later use (\ref{eq:HJB}) to check the optimality of our algorithm.


\subsubsection*{Augmented Lagrangian Approach}
Similar to \cite{MR1738163}, we solve the martingale optimal transport problem using the augmented Lagrangian approach. Let us begin by briefly recalling the well-known definition and properties of the convex conjugate. For more details, the readers are referred to Section 12 of Rockafellar \cite{MR0274683}.

Fix $D\subseteq\R^d$, let $f:\R^d\to\R\cup\{+\infty\}$ be a proper convex and lower semi-continuous function. Then the \emph{convex conjugate} of $f$ is the function $f^*:\R^d\to\R\cup\{+\infty\}$ defined by 
\begin{equation}\label{eq:lagendre_transform}
f^*(y):=\sup_{x\in \R^d}(x\cdot y-f(x)).
\end{equation}
The convex conjugate is also often known as the \emph{Legendre-Fenchel transform}.

\begin{proposition}\label{propap02}
We have the following properties:\\
(i) $f^*$ is a proper convex and lower semi-continuous function with $f^{**}\equiv f$;\\
(ii) if $f$ is differentiable, then $f(x)+f^*(f'(x))=xf'(x)$.
\end{proposition}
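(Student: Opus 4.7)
The plan is to treat the two parts separately, with (i) broken into the three sub-claims it contains: convexity/lower semi-continuity, properness, and the biconjugate identity. Part (ii) will follow quickly from the subgradient characterization of differentiable convex functions.

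For the convexity and lower semi-continuity claim in (i), I would simply observe that for each fixed $x$ the map $y \mapsto x \cdot y - f(x)$ is affine, hence convex and continuous on $\R^d$. Since $f^*$ is defined as the pointwise supremum of this family of affine functions (over all $x \in \R^d$ with $f(x) < \infty$), it is automatically convex and lower semi-continuous. For properness, I need two bounds. That $f^*$ nowhere equals $-\infty$ is immediate: pick any $x_0$ with $f(x_0) < \infty$ (available since $f$ is proper) and note $f^*(y) \geq x_0 \cdot y - f(x_0) > -\infty$ for every $y$. That $f^*$ is not identically $+\infty$ requires the existence of an affine minorant of $f$; this is a standard consequence of $f$ being proper, convex, and lower semi-continuous, obtained by a Hahn-Banach separation of the closed convex epigraph of $f$ from a point strictly below it.

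The biconjugate identity $f^{**} \equiv f$ is the Fenchel-Moreau theorem and I expect it to be the most delicate step. The easy direction $f^{**} \leq f$ is immediate from the definition: for any $x$ and any $y$, $x \cdot y - f^*(y) \leq f(x)$, so taking $\sup$ over $y$ gives $f^{**}(x) \leq f(x)$. The reverse requires a separation argument. Suppose for contradiction that $f^{**}(x_0) < f(x_0)$ at some point $x_0$. Then the point $(x_0, f^{**}(x_0))$ lies strictly below the closed convex epigraph of $f$, and by Hahn-Banach can be separated from it by an affine hyperplane $(a, b) \mapsto \alpha \cdot a + \beta b - c$. After a short argument to handle the vertical case (where $\beta = 0$), one rescales so that $\beta = -1$ and reads off an affine function $a \mapsto \alpha \cdot a - c$ lying everywhere below $f$ but strictly above $f^{**}$ at $x_0$; evaluating $f^*(\alpha)$ and $f^{**}(x_0)$ yields a contradiction. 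This is the standard Rockafellar argument and I would cite Section 12 of \cite{MR0274683} rather than reproduce the separation details.

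Part (ii) is short. If $f$ is differentiable at $x$, then convexity gives the gradient inequality $f(z) \geq f(x) + f'(x) \cdot (z - x)$ for all $z$, which rearranges to $z \cdot f'(x) - f(z) \leq x \cdot f'(x) - f(x)$. Taking the supremum over $z$ on the left and noting that equality holds at $z = x$ shows that $f^*(f'(x)) = x \cdot f'(x) - f(x)$, which is exactly the stated identity. The only real subtlety across the whole proposition is the Fenchel-Moreau direction in (i); everything else is bookkeeping.
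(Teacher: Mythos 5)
Your proof is correct. The paper itself supplies no proof of Proposition~\ref{propap02}: it states these standard properties of the Legendre--Fenchel transform and defers to Section~12 of Rockafellar \cite{MR0274683}. Your sketch reconstructs precisely the textbook argument that citation points to --- supremum of affine maps for convexity and lower semi-continuity of $f^*$, an affine minorant obtained by Hahn--Banach separation of the epigraph for properness, Fenchel--Moreau for $f^{**}\equiv f$, and the gradient (subgradient) inequality for part~(ii) --- so there is nothing to reconcile with the paper beyond noting that you wrote out in full what the authors chose to leave to a reference.
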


Returning to the problem at hand, recall that $G(x,y) := xF(y/x), x>0$ is convex in $(x,y)$. By adopting the convention of $G(x,y) =\infty$ whenever $x\leq 0$, it can be expressed in terms of the convex conjugate, as shown in the following proposition.
\begin{proposition}\label{propap03} Denote by $F^*$ the convex conjugate of $F$.

(i) Let $G(x,y) = xF(y/x)$, the convex conjugate of $G$ is given by:
\begin{equation}\label{eq:lagendre_transform_G}
  G^*(a,b) = \begin{cases}
    0,  &\text{if } a+F^*(b)\leq 0,\\
    \infty, &\text{otherwise}.
  \end{cases}
\end{equation}

(ii) For $x>0$, We have the following equality,
\begin{equation}
  xF\left(\frac{y}{x}\right)= \sup_{(a,b)\in\R^2} \{ax+b y : a+F^*(b)\leq 0\}.
\end{equation}

\end{proposition}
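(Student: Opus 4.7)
\textbf{Plan for Proposition \ref{propap03}.}

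My approach is to prove (i) by a direct computation of the convex conjugate using the homogeneity of the perspective function $G(x,y)=xF(y/x)$, and then deduce (ii) from the biconjugation identity in Proposition \ref{propap02}(i). The key observation is that $G$ is positively homogeneous of degree $1$, which collapses the two-variable sup defining $G^*$ into nested one-variable suprema whose outer part is linear in $x$.

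For part (i), I would start from
\[
G^*(a,b) \;=\; \sup_{(x,y)\in\R^2} \bigl( ax + by - G(x,y)\bigr).
\]
Since $G(x,y)=+\infty$ for $x\leq 0$ by convention, the supremum reduces to $x>0$. Substituting $z=y/x$ (so $y=xz$) gives
\[
G^*(a,b) \;=\; \sup_{x>0}\; x\,\sup_{z\in\R}\bigl(a + bz - F(z)\bigr) \;=\; \sup_{x>0}\; x\bigl(a + F^*(b)\bigr),
\]
using the definition of $F^*$. The inner factor is a constant in $x$: if it is $\leq 0$ the outer sup equals $0$ (the supremum is approached as $x\to 0^+$), and if it is $> 0$ the outer sup equals $+\infty$. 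This gives precisely \eqref{eq:lagendre_transform_G}.

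For part (ii), the cleanest route avoids any delicate lsc argument for $G$ at the boundary $x=0$ and instead uses only the biconjugation $F^{**}=F$ from Proposition \ref{propap02}(i). For $x>0$, I would show the two inequalities separately. If $(a,b)$ satisfies $a+F^*(b)\leq 0$, then $a\leq -F^*(b)$, and since $x>0$,
\[
ax+by \;\leq\; -xF^*(b)+by \;\leq\; \sup_{b'\in\R}\bigl( b'y - xF^*(b')\bigr) \;=\; x\,F^{**}(y/x) \;=\; xF(y/x),
\]
which yields the $\leq$ direction. Conversely, for each $b$ with $F^*(b)<\infty$, the choice $a=-F^*(b)$ lies in the admissible set and gives $ax+by = by - xF^*(b)$; taking the sup over $b$ on the right reproduces $xF(y/x)$ by biconjugation, proving the $\geq$ direction. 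No obstacle of substance arises here: the only delicate point is the boundary behaviour at $x=0$, which is avoided by restricting to $x>0$ in the statement.
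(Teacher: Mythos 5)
Your proof of part (i) is essentially identical to the paper's: both substitute $z=y/x$, recognize the inner supremum over $y$ as $xF^*(b)$, and then observe that $\sup_{x>0} x\bigl(a+F^*(b)\bigr)$ equals $0$ or $+\infty$ according to the sign of $a+F^*(b)$. For part (ii), however, you take a genuinely different and more careful route. The paper deduces (ii) in one line by invoking biconjugation of $G$, i.e.\ $G^{**}=G$, which tacitly requires $G$ (with the convention $G=\infty$ for $x\leq 0$) to be proper, convex and lower semicontinuous; lower semicontinuity of the perspective function at the boundary $x=0$ is a genuinely delicate point under this convention and is not checked in the paper. You sidestep this entirely by proving the two inequalities for fixed $x>0$ directly from $F^{**}=F$ (Proposition \ref{propap02}(i)): the upper bound $ax+by\leq by-xF^*(b)\leq xF^{**}(y/x)$ for admissible $(a,b)$, and the lower bound by choosing $a=-F^*(b)$ and taking the supremum over $b$. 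This buys you a self-contained argument that relies only on the biconjugation of $F$, which is the hypothesis actually guaranteed by Proposition \ref{propap02}, at the modest cost of a few extra lines. Both arguments are correct; yours is the more airtight of the two.
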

\begin{proof} (i)
  By definition, the convex conjugate of $G$ is given by
  \begin{align}
    G^*(a,b) &= \sup_{(x,y)\in \R^2} \left\{ ax+by-xF\left(\frac{y}{x}\right) : x>0 \right\} \\
        &= \sup_{(x,y)\in \R^2} \left\{ ax+x\left( b\frac{y}{x}-F\left(\frac{y}{x}\right) \right) : x>0 \right\} \\
		&=\sup_{x>0} \left\{ x(a+F^*(b)) \right\},
  \end{align}
  If $a+F^*(b)\leq 0$, the supremum is achieved by limit $x\to0$, otherwise, $G^*$ becomes unbounded as $x$ increases. This establishes part (i).

(ii) The required equality follows immediately from part (i) and the fact that
\[
xF\left(\frac{y}{x}\right)= \sup_{(a,b)\in\R^2} \{ax+b y - G^*(a,b) : a+F^*(b)\leq 0\}. \tag*{\qed}
\]
\end{proof}

Now we are in a position to present the augmented Lagrangian. First, let us introduce the following notations:
\begin{gather}
  K = \Bigl\{ ( a,b ) : \R\times\R\to\R\times\R ~\Big|~ a+F^*(b)\leq 0 \Bigr\}, \\
  \mu = (\rho, m) = (\rho, \rho\gamma) ,\quad q=(a,b) ,\quad \langle\mu,q\rangle = \int_D\int_0^1 \mu\cdot q ,\\
  H(q) = G^*(a,b) = \begin{cases}
    0,  &\text{if } q\in K,\\
    \infty, &\text{otherwise},
  \end{cases}\\
  J(\phi) = \int_D[\phi(0,x)\rho_0 - \phi(1,x)\rho_1],\\
  \nabla_{t,xx} = (\partial_t,\partial_{xx}) .
\end{gather}
By using the above notations, we can express the equality from Proposition \ref{propap03} (ii) in the following way,
\begin{equation}
  \rho F\left(\frac{m}{\rho}\right) = \sup_{\{a,b\}\in K} \{a\rho+b m\} = \sup_{q\in K} \{\mu\cdot q\}.
\end{equation}
Since the restriction $q\in K$ is checked point-wise for every $(t,x)$, we can exchange the supremum with the integrals in the following equality
\begin{equation}
  \int_D\int_0^1 \sup_{q\in K} \{\mu\cdot q\} = \sup_q\Bigl\{ -H(q)+\int_D\int_0^1 \mu\cdot q \Bigr\} = \sup_q\Bigl\{ -H(q)+ \langle\mu,q\rangle \Bigr\}.
\end{equation}
Therefore, the saddle point problem specified by \eqref{eq:Lagrangian_2} can be rewritten as
\begin{equation}\label{eq:new_saddle}
  \sup_\mu\inf_{\phi,q}\Bigl\{ H(q)+J(\phi)+\langle\mu,\nabla_{t,xx}\phi - q\rangle \Bigr\}.
\end{equation}
Note that in the new saddle point problem (\ref{eq:new_saddle}), $\mu$ is the Lagrange multiplier of the new constraint $\nabla_{t,xx}\phi = q$. In order to turn this into a convex problem, we define the augmented Lagrangian as follows:
\begin{equation}\label{eq:new_saddle2}
  L_r(\phi,q,\mu)=H(q)+J(\phi)+\langle\mu,\nabla_{t,xx}\phi - q\rangle +\frac{r}{2}\langle\nabla_{t,xx}\phi - q,\nabla_{t,xx}\phi - q\rangle,
\end{equation}
where $r>0$ is a penalization parameter. The saddle point problem then becomes
\begin{equation}\label{eq:new_saddle3}
  \sup_\mu\inf_{\phi,q}L_r(\phi,q,\mu),
\end{equation}\enlargethispage{1\baselineskip}%
which has the same solution as (\ref{eq:Lagrangian_2}).

\section{Numerical Method}
\label{sec:numerical_method}

In this section, we describe in detail the alternative direction method of multipliers (ADMM) algorithm for solving the saddle point problem given by \eqref{eq:new_saddle2} and \eqref{eq:new_saddle3}.
In each iteration, using $(\phi^{n-1}, q^{n-1}, \mu^{n-1})$ as a starting point, the ADMM algorithm performs the following three steps:
\begin{alignat}{3}
  \text{Step A: }&\quad \phi^n &&= \argmin_\phi L_r( \phi,q^{n-1},\mu^{n-1} ), \\
  \text{Step B: }&\quad q^n &&= \argmin_q L_r( \phi^n,q,\mu^{n-1} ), \\
  \text{Step C: }&\quad \mu^n &&= \argmax_\mu L_r( \phi^n,q^n,\mu ).
\end{alignat}

\noindent
{\bf Step A:} $\phi^n = \argmin_\phi L_r( \phi,q^{n-1},\mu^{n-1} )$\\

\noindent To find the function $\phi^n$ that minimizes $L_r(\phi,q^{n-1},\mu^{n-1})$, we set the functional derivative of $L_r$ with respect to $\phi$ to zero:
\begin{equation}
  J(\phi) + \langle\mu^{n-1}, \nabla_{t,xx} \phi\rangle + r\langle\nabla_{t,xx}\phi^n-q^{n-1}, \nabla_{t,xx}\phi\rangle = 0 .
\end{equation}
By integrating by parts, we arrive at the following variational equation
\begin{equation}\label{eq:PDE}
  -r(\partial_{tt}\phi^n - \partial_{xxxx}\phi^n) = \partial_t(\rho^{n-1}-ra^{n-1}) - \partial_{xx}(m^{n-1}-rb^{n-1}), 
\end{equation}
with Neumann boundary conditions in time $\forall x\in D$:
\begin{align}
  r\partial_t\phi^n(0,x) &= \rho_0 - \rho^{n-1}(0,x) + ra^{n-1}(0,x),\\
  r\partial_t\phi^n(1,x) &= \rho_1 - \rho^{n-1}(1,x) + ra^{n-1}(1,x).
\end{align}
For the boundary conditions in space, let $D=[\underline{D},\overline{D}]$. We give the following boundary condition to the diffusion coefficient:
\[
  \gamma(t,\underline{D}) = \gamma(t,\overline{D}) = \overline{\gamma} := \argmin_{\gamma\in\R} F(\gamma).
\]
From (\ref{eq:Lagrangian_2}) and (\ref{eq:HJB}), we know $\partial_{xx}\phi$ is the dual variable of $\gamma$. Since $\overline{\gamma}$ minimizes $F$, the corresponding $\partial_{xx}\phi$ must be zero. Therefore, we have the following boundary conditions:
\begin{equation}
  \partial_{xx}\phi(t,\underline{D}) = \partial_{xx}\phi(t,\overline{D}) = 0, \quad \forall t\in[0,1].
\end{equation}
In \cite{MR1738163}, periodic boundary conditions were used in the spatial dimension and a perturbed equation was used to yield a unique solution. Since periodic boundary conditions are inappropriate for martingale diffusion and we are dealing with a bi-Laplacian term in space, we impose the following additional boundary conditions in order to enforce a unique solution:
\begin{equation}
  \phi(t,\underline{D})=\phi(t,\overline{D})=0 ,\quad \forall t\in[0,1].
\end{equation}
Now, the 4th order linear PDE (\ref{eq:PDE}) can be numerically solved by the finite difference method or the finite element method. \\

\noindent
{\bf Step B:} $q^n = \argmin_q L_r( \phi^n,q,\mu^{n-1} )$\\

\noindent Since $H(q)$ is not differentiable, we cannot differentiate $L_r$ with respect to $q$. Nevertheless, we can simply obtain $q^n$ by solving the minimization problem
\begin{equation}
  \inf_q L_r(\phi^n,q,\mu^{n-1}).
\end{equation}
This is equivalent to solving
\begin{equation}
  \inf_{q\in K} \left\langle \nabla_{t,xx}\phi^n + \frac{\mu^{n-1}}{r}-q, \nabla_{t,xx}\phi^n + \frac{\mu^{n-1}}{r}-q \right\rangle.
\end{equation}
Now, let us define
\begin{equation}
  p^n(t,x) = \{\alpha^n(t,x),\beta^n(t,x)\} = \nabla_{t,xx}\phi^n(t,x) + \frac{\mu^{n-1}(t,x)}{r},
\end{equation}
then we can find $q^n(t,x)=\{a^n(t,x),b^n(t,x)\}$ by solving
\begin{equation}\label{eq:stepb_quad}
  \inf_{\{a,b\}\in \R\times\R} \Bigl\{ (a(t,x)-\alpha^n(t,x))^2 + (b(t,x) - \beta^n(t,x))^2 : a+F^*(b) \leq 0 \Bigr\}
\end{equation}
point-wise in space and time. This is a simple one-dimensional projection problem. If $\{\alpha^n,\beta^n\}$ satisfies the constraint $ \alpha^n+F^*(\beta^n) \leq 0$, then it is also the minimum. Otherwise, the minimum must occur on the boundary $a+F^*(b) = 0 $. In this case we substitute the condition into (\ref{eq:stepb_quad}) to obtain
\begin{equation}\label{eq:stepb_poly}
  \inf_{b\in\R} \Bigl\{ (F^*(b(t,x)) + \alpha(t,x))^2+(b(t,x)-\beta(t,x))^2 \Bigr\},
\end{equation}
which must be solved point-wise.
The minimum of \eqref{eq:stepb_poly} can be found using standard root finding methods such as Newton's method. In some simple cases it is even possible to compute the solution analytically.\\

\noindent
{\bf Step C:} $\mu^n = \argmax_\mu L_r( \phi^n,q^n,\mu )$\\

\noindent Begin by computing the gradient by differentiating the augmented Lagrangian $L_r$ respect to $\mu$. Then, simply update $\mu$ by moving it point-wise along the gradient as follows,
\begin{equation}
  \mu^{n}(t,x) = \mu^{n-1}(t,x) + r(\nabla_{t,xx}\phi^n(t,x) - q^n(t,x)).
\end{equation}\\

\newpage
\noindent
{\bf Stopping criteria:}\\

\noindent Recall the HJB equation (\ref{eq:HJB}):
\begin{equation}\label{eq:stop01}
  \partial_t \phi+F^*(\partial_{xx}\phi)=0 .
\end{equation}
We use \eqref{eq:stop01} to check for optimality. Define the residual:
\begin{equation}
  res^n=\max_{t\in[0,1],x\in D} \rho \left| \partial_t \phi+F^*(\partial_{xx}\phi) \right|.
\end{equation}
This quantity converges to 0 when it approaches the optimal solution of the problem. The residual is weighted by the density $\rho$ to alleviate any potential issues caused by small values of $\rho$.

\section{Numerical Results}
\label{sec:numerical_results}

\begin{figure}[t]
\includegraphics[width=\textwidth]{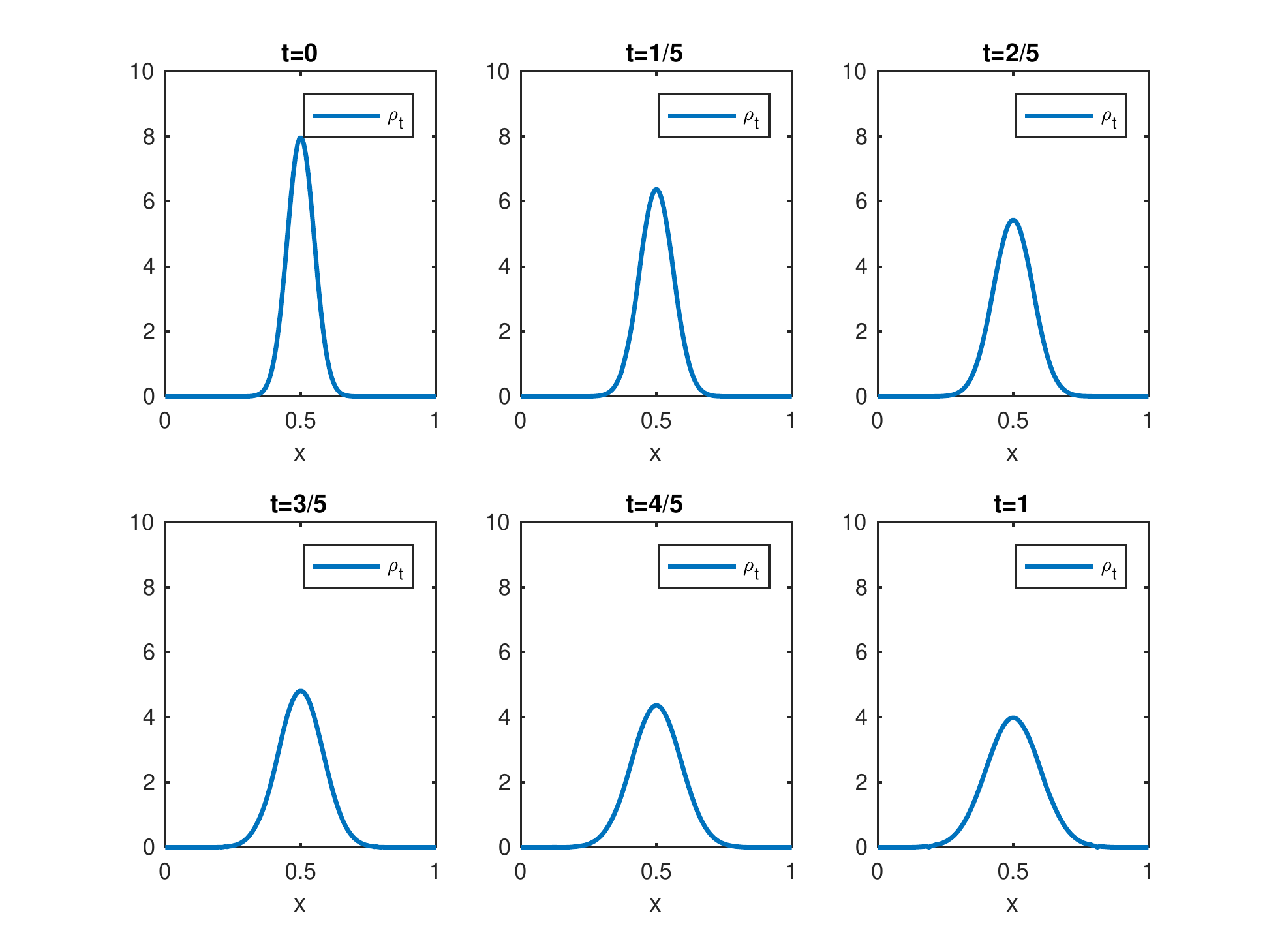}
\caption{The density function $\rho(t,x)$.}
\label{fig1}
\end{figure}
\begin{figure}[h]
\includegraphics[width=\textwidth]{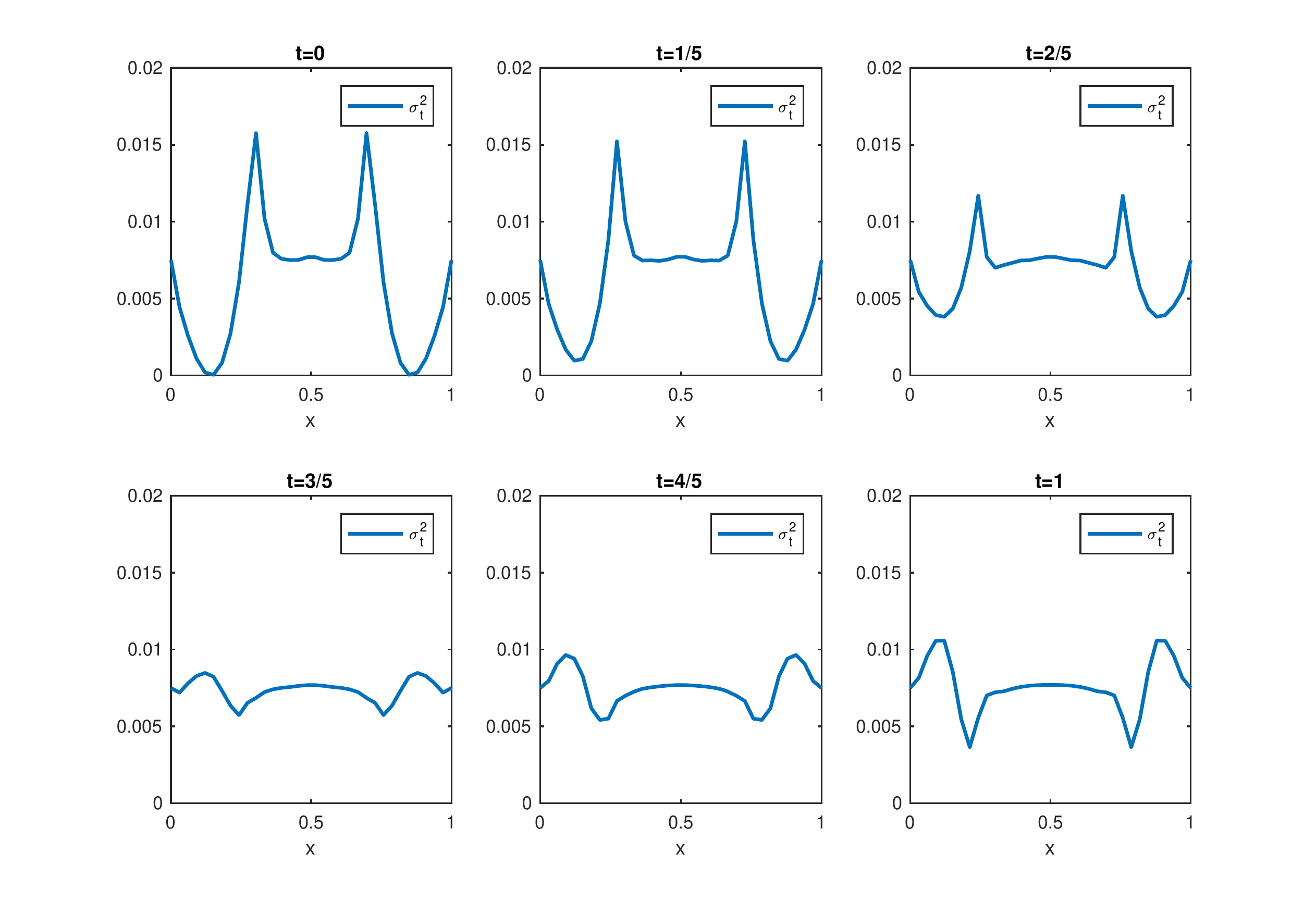}
\caption{The variance $\sigma^2(t,x)$.}
\label{fig2}
\end{figure}
\begin{figure}[t]
\includegraphics[width=\textwidth]{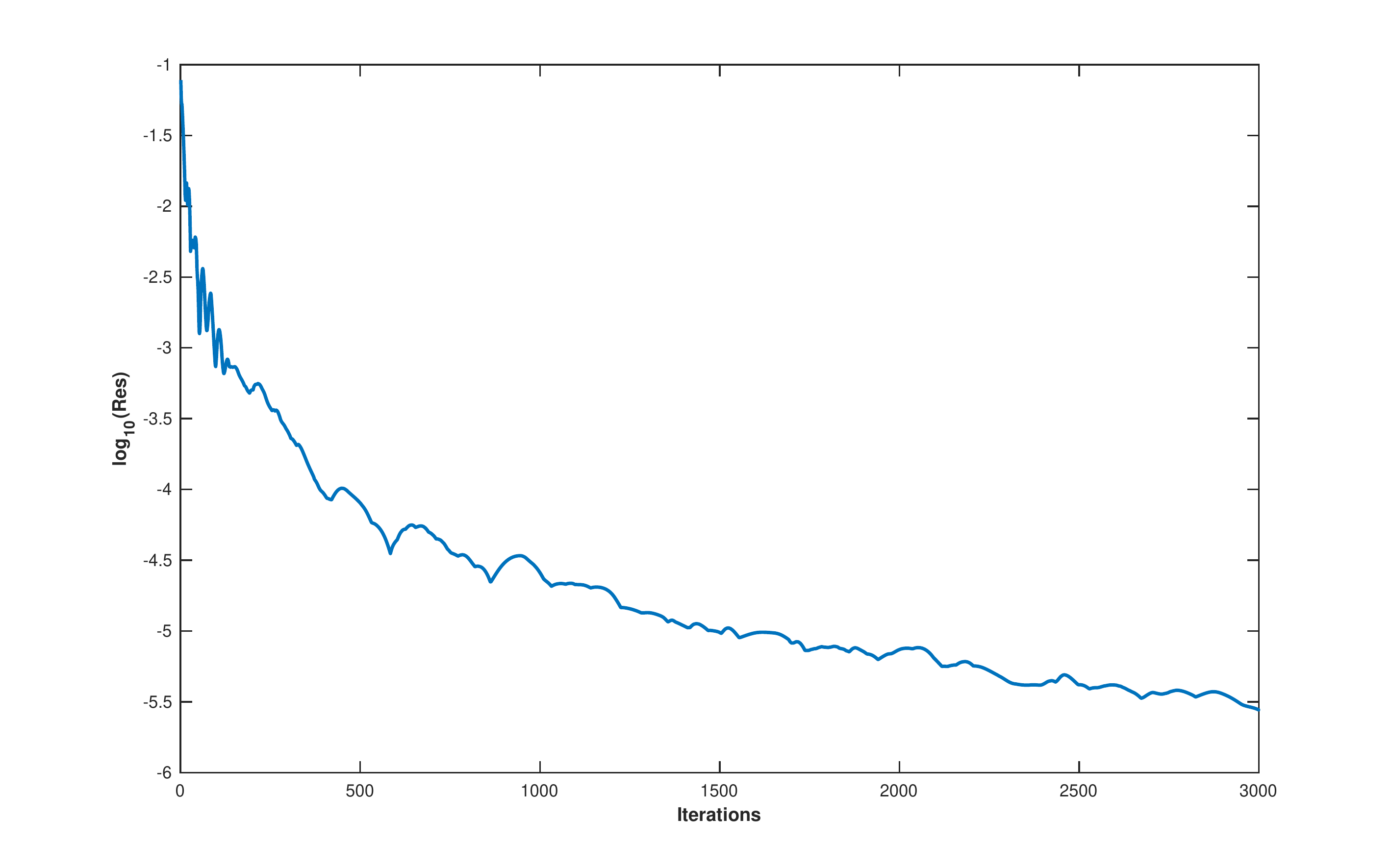}
\caption{The residual $res^n$.}
\label{fig3}
\end{figure}

The algorithm was implemented and tested on the following simple example. Consider the computational domain $x\in[0,1]$ and the time interval $t\in[0,1]$. We set the initial and final distributions to be $X_0\sim N(0.5,0.05^2)$ and $X_1\sim N(0.5,0.1^2)$ respectively, where $N(\mu,\sigma^2)$ denotes the normal distribution. The following cost function was chosen:
\begin{equation}
  F\left(\gamma\right) = \begin{cases}
  (\gamma - \overline{\gamma})^2, &\gamma\geq 0, \\
  +\infty, &\text{otherwise,}
  \end{cases}
\end{equation}
where $\overline{\gamma}$ was set to 0.00375 so that the optimal value of variance is constant $\sigma^2 = 0.1^2-0.05^2=0.0075$. Then we discretized the space-time domain as a $128\times128$ lattice. The penalization parameter is set to $r=64$. The results after 3000 iterations are shown in Figures \ref{fig1} and \ref{fig2}, and the convergence of the residuals is shown in figure \ref{fig3}. The convergence speed decays quickly, but we reach a good approximation after about 500 iterations. The noisy tails in Figure \ref{fig2} correspond to regions where the density $\rho$ is close to zero. The diffusion process has a very low probability of reaching these regions, so the value of $\sigma^2$ has little impact. 
In areas where $\rho$ is not close to zero, $\sigma^2$ remains constant which matches the analytical solution. 


\section{Summary}

This paper focuses on a new approach for the calibration of local volatility models. Given the distributions of the asset price at two fixed dates, the technique of optimal transport is applied to interpolate the distributions and recover the local volatility function, while maintaining the martingale property of the underlying process. Inspired by \cite{MR1738163}, the problem is first converted into a saddle point problem, and then solved numerically by an augmented Lagrangian approach and the alternative direction method of multipliers (ADMM) algorithm. The algorithm performs well on a simple case in which the numerical solution matches its analytical counterpart.
The main drawback of this method is due to the slow convergence rate of the ADMM algorithm. We observed that a higher penalization parameter may lead to faster convergence. Further research is required to conduct more numerical experiment, improve the efficiency of the algorithm and apply it to more complex cases.

\bigskip
\bibliographystyle{spmpsci}
\bibliography{ot}

\end{document}